\definecolor{ocre}{RGB}{243,102,25} 
\definecolor{darkocre}{RGB}{121,51,12} 
\definecolor{lightocre}{RGB}{255,150,37} 
\definecolor{verylightocre}{RGB}{255,204,50} 
\definecolor{soccerfield}{RGB}{107,142,35} 
\definecolor{lightgray}{RGB}{200,200,200} 
\definecolor{warmblue}{RGB}{51,102,153} 
\definecolor{lightwarmblue}{RGB}{105,141,198} 
\definecolor{sepia}{RGB}{112,66,20}
\newcommand{\btkz}{\begin{tikzpicture}}
\newcommand{\etkz}{\end{tikzpicture}}
\newcommand{\brk}[1]{\left(#1\right)}          
\newcommand{\Brk}[1]{\left[#1\right]}          
\newcommand{\Abs}[1]{\left| #1 \right|}        
\newcommand{\Norm}[1]{\left\| #1 \right\|}     
\newcommand{\Cases}[1]{\begin{cases} #1 \end{cases}}
\newcommand{\pd}[2]{\frac{\partial#1}{\partial#2}}
\newcommand{\pdd}[2]{\frac{\partial^2#1}{\partial#2^2}}
\newcommand{\secref}[1]{Section~\ref{#1}}
\newcommand{\figref}[1]{Figure~\ref{#1}}
\newcommand{\propref}[1]{Proposition~\ref{#1}}
\newcommand{\beq}{\begin{equation}}
\newcommand{\eeq}{\end{equation}}
\newcommand{\bsplit}{\begin{split}}
\newcommand{\esplit}{\end{split}}
\newcommand{\baligned}{\begin{aligned}}
\newcommand{\ealigned}{\end{aligned}}
\newcommand{\Emph}[1]{{\slshape\bfseries #1}}  
\providecommand{\e}{\varepsilon}
\providecommand{\half}{\frac{1}{2}}
\providecommand{\R}{\bbR}
\newcommand{\Textand}{\qquad\text{ and }\qquad}
\newcommand{\Hom}{{\operatorname{Hom}}}
\newcommand{\limn}{\lim_{n\to\infty}}
\newcommand{\calL}{{\mathcal L}}
\newcommand{\calM}{{\mathcal M}}
\newcommand{\calT}{{\mathcal T}}
\newcommand{\frakg}{\mathfrak{g}}
\newcommand{\bbN}{{\mathbb N}}
\newcommand{\bbR}{{\mathbb R}}
\newcommand{\scrD}{\mathscr{D}}
\newcommand{\M}{\calM}
\newcommand{\cof}[1]{\vartheta^{#1}}
\newcommand{\g}{\frakg}
\renewcommand{\Emph}[1]{{\bfseries #1}}
\newcommand{\Mleft}{\M_{\text{left}}}
\newcommand{\Mright}{\M_{\text{right}}}
\newcommand{\Mtop}{\M_{\text{top}}}
\newcommand{\Mbottom}{\M_{\text{bottom}}}
\DeclareMathOperator{\supp}{supp}
\begin{document}

\title*{Homogenization of edge-dislocations as a weak limit of de-Rham currents}
\author{Raz Kupferman and Elihu Olami}
\institute{Raz Kupferman \at Institute of Mathematics, The Hebrew University, \email{raz@math.huji.ac.il}
\and Elihu Olami \at Institute of Mathematics, The Hebrew University \email{elikolami@gmail.com}}
%
%
\maketitle



\abstract{In the material science literature we find two continuum models for crystalline defects: (i) A body with (finite) isolated defects is typically modeled as a Riemannian manifold with singularities, and (ii) a body with continuously distributed defects, which is modeled as a smooth (non-singular) Riemannian manifold with an additional structure of an affine connection. In this work we show how continuously distributed defects may be obtained as a limit of singular ones .   The defect structure is represented by layering $1$-forms and their singular counterparts - de-Rham $(n-1)$ currents.  We then show that every smooth layering $1$-form may be obtained as a limit, in the sense of currents, of singular layering forms, corresponding to arrays of edge dislocations. As a corollary,  we investigated manifolds with full material structure, i.e., a complete co-frame for the co-tangent bundle. We define the notion of singular torsion current for manifolds with a parallel structure and prove its convergence to the regular smooth torsion tensor at homogenization limit. Thus establishing the so-called emergence of torsion at the homogenization limit. }

\section{Introduction}
\label{sec:intro}

The study of material defects, and notably dislocations, is a central theme in material science. 
The modeling of solid bodies, with or without defects, often follows a paradigm in which the
elemental object is that of a \emph{body manifold}: solid bodies are modeled as geometric objects---manifolds---and their internal structure is represented by additional structure such as a frame field, a metric or an affine connection. The mechanical properties of the body enter through a \emph{constitutive relation}, whose structure is correlated with the geometric structure of the body.

There have been two distinct approaches to the modeling of body manifolds with dislocations:

\begin{enumerate}
\item \Emph{Isolated dislocations}: One starts with a defect-free body, which is either modeled as a compact subset of Euclidean space, or as a perfect lattice. Defects are introduced by Volterra cut-and-weld protocols \cite{Vol07}; see \figref{fig:1}. Note that a perfect lattice may be related to a Euclidean structure by assigning lengths and angles to inter-particle bonds.  

\begin{figure}
\begin{center}
\includegraphics[height=2in]{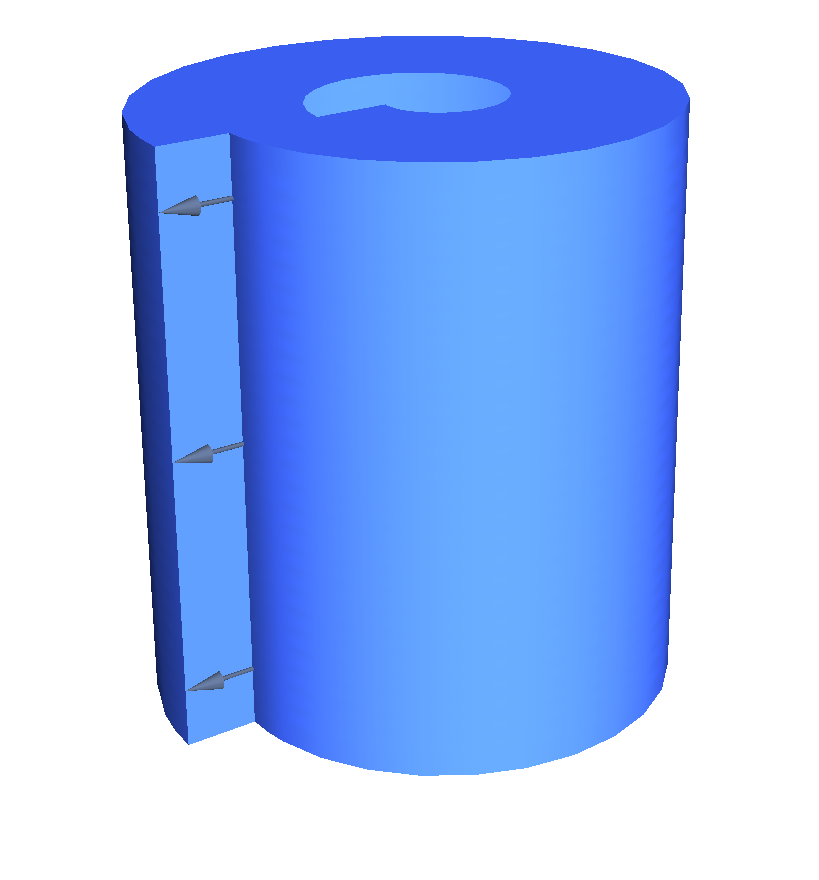}
\includegraphics[height=1.8in]{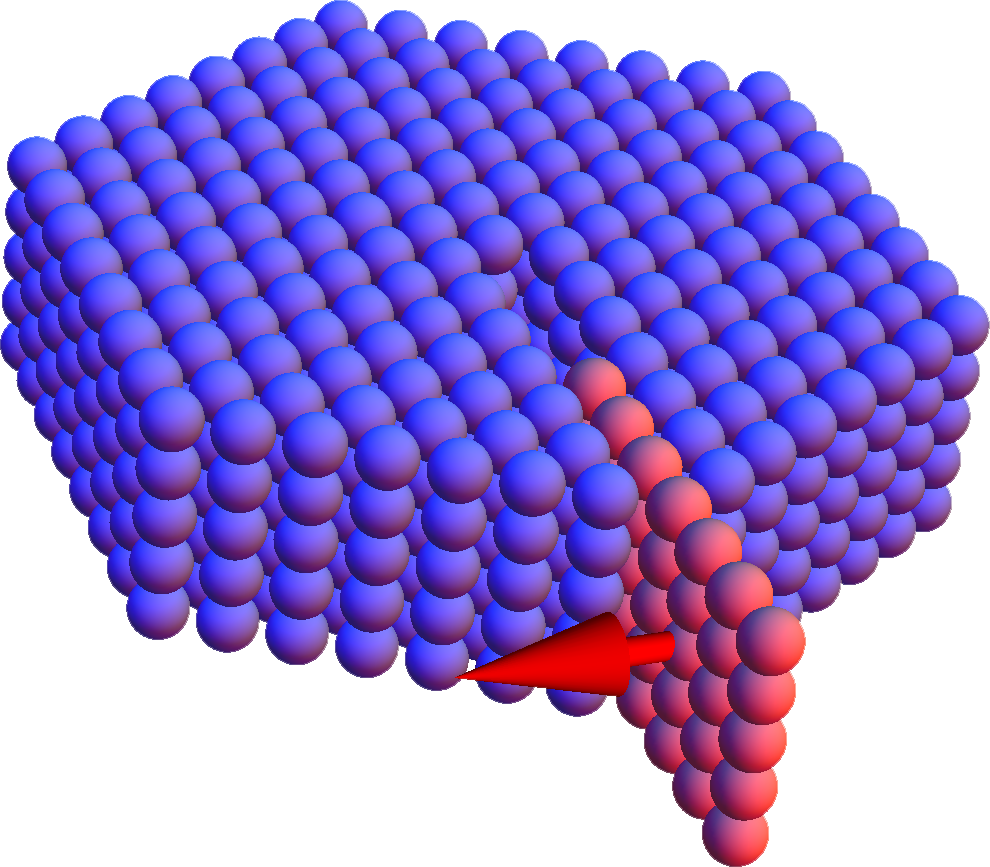}
\end{center}
\caption{ Left: An edge-dislocation generated by a cut-and-weld protocol in a continuum setting. Right: An edge-dislocation generated by removing a half-plane in a lattice.}
\label{fig:1}
\end{figure}

\item \Emph{Distributed dislocations}: In the classical literature from the 1950s, 
the body is modeled as a smooth manifold endowed with a curvature-free affine connection \cite{Nye53,Kon55,BBS55,Kro81}. 
If, in addition, one adds a basis of the tangent space at one point, then the affine connection induces a smooth frame field, which is the kinematic model, for example, in \cite{Dav86}.
In later literature \cite{YG12}, the continuum model is that of a Weitzenb\"ock manifold, which is a smooth manifold endowed with a Riemannian metric and a metrically-consistent, curvature-free affine connection (in fact, the vanishing curvature condition has to be replaced by the even stronger condition of trivial holonomy). Note that a frame field induces an intrinsic metric, so that all three descriptions are essentially identical.
The density of the dislocations is identified with the torsion tensor of the affine connection.
\end{enumerate}

A longstanding problem has been to rigorously justify the continuum model of distributed dislocations as a limit of  (properly scaled) isolated dislocations, as their number tends to infinity, in the spirit of other homogenization theories. Such an analysis was recently presented in \cite{KM15,KM16}. Specifically, bodies with either isolated or distributed dislocations were modeled as Weitzenb\"ock manifolds $(\M,\g,\nabla)$. In the case of isolated dislocations, the smooth part of the manifold is multiply-connected, the defects being located inside either non-smooth sets, or ``holes", and the connection is the Riemannian (Levi-Civita) connection. A sequence of multiply-connected body manifolds with isolated dislocations may converge to a simply-connected body manifold $(\M,\g,\nabla)$, where $\nabla$ is non-symmetric; that it, torsion arises as a weak limit of torsion-free connections. Moreover, it was shown that every triple $(\M,\g,\nabla)$ can be obtained as a limit of bodies with isolated dislocations. 

The work \cite{KM15,KM16} has several shortcomings: (i) The notion of convergence was tailored to the problem, and as a result, does not coincide with prevalent notions of convergence. 
(ii) The standard mathematical apparatus accounting for singularities is generalized functions (or generalized sections), thus providing a natural setting for convergence. This has been missing here. (iii) In particular, one would hope to recover a notion of singular torsion, in the same spirit as one obtains a notion of singular curvature for cone singularities. (iv) This analysis requires the consideration of a complete lattice structure, not including, for example, scalar elastic invariants.

An alternative approach to defects, and notably to dislocations, was proposed by Epstein and Segev \cite{ES14}. Their point of view is that every material structure is represented by one or more differential forms: while smooth structures are represented by smooth differential forms, singularities in structure are represented by their distributional counterparts---\emph{de-Rham currents}.

Specifically, \cite{ES14} models the structure of a lattice by means of differential forms termed \emph{layering forms} which represent \emph{Bravais surfaces}. In $n$ dimensions, the prescription of a set of $n$ linearly-independent 1-forms  $\cof{i}$ (a coframe) amounts to Davini's frame field approach \cite{Dav86}, but the points in \cite{ES14} are: 
\begin{enumerate}
\item A single layering form may suffice to track the presence of defects.
\item Layering forms can be singular. 
\end{enumerate}

In the absence of defects, the layering forms are closed, namely,
\[
d\cof{i} = 0.
\]
In the case of distributed defects, we expect $d\cof{i}\ne0$, where the 2-form $d\cof{i}$ is related to the density of the defects.

The prescription of $n$ linearly-independent layering forms defines an \emph{intrinsic metric}, $\sum_i \cof{i}\otimes \cof{i}$ and a \emph{material connection}, whose path-independent parallel transport $\Pi_p^q:T_p\M\to T_q\M$ between two points $p,q\in\M$ is given by
\[
\Pi_p^q = {e_i}|_q \otimes {\cof{i}}|_p,
\] 
where $\{e_i\}$ is the frame field dual to $\{\cof{i}\}$ (here and below we adopt Einstein's summation convention).
The generalization of this approach to structures with singularities is as follows: to every smooth 1-form $\cof{i}$ corresponds an $(n-1)$-current
\[
T_{\cof{i}}(\alpha) = \int_\M \cof{i}\wedge\alpha,
\qquad
\alpha\in\Omega^{n-1}_c(\M),
\]
where $\Omega^{k}_c(\M)$ denotes the module of smooth, compactly-supported $k$-forms on $\M$ (see \secref{sec:currents} for a short review of de-Rham currents).
By definition, the boundary of an $(n-1)$-current is an $(n-2)$-current
\[
\partial T_{\cof{i}}(\beta) = \int_\M \cof{i}\wedge d\beta=\int_\M d\cof{i}\wedge \beta
\qquad
\beta\in\Omega^{n-2}_c(\M),
\]
where the second identity follows from integration by parts and the compact support of $\beta$.
If $\cof{i}$ is closed then $\partial T_{\cof{i}}=0$, i.e., the absence of defects is reflected by the vanishing of the boundary of the current induced by $\cof{i}$. Just like in classical distribution theory, not every $(n-1)$-current is induced by a smooth 1-form; 
 structures with singularities are modeled by $(n-1)$-currents; the defects are associated with the boundary of those currents that are not induced by smooth forms.

In this work, we show that the homogenization of singular defects can be cast in the framework of weak convergence of currents. To set the stage, we review 
in \secref{sec:currents} some basic facts about de-Rham currents. 
In \secref{sec:example}, we consider an arbitrary (generally non-closed) smooth layering form $\beta\in \Omega^1(\M)$ on the  two-dimensional square $\M=[0,1]^2$, which we view as representing  distributed edge-dislocations. We develop a generic construction of a  layering form $\nu$, which approximate $\beta$ (in a sense made precise), while being smooth and closed everywhere, except on a one-dimensional sub-manifold $\Gamma$. Furthermore, interpreting the layering form $\nu$ as a 1-current, we show that the boundary of that current is supported on $\Gamma$. 
Thus, view the layering form $\nu$ as representing a singular edge-dislocation, whose locus is $\Gamma$, and whose intensity is equal to the total intensity of the layering form $\beta$.


In \secref{sec:homogenization},we show that every (possibly non-closed) $1$-form $\beta\in\Omega^1(\M)$ can be approximated by a sequence of discontinuous layering forms $\nu^{(n)}$, representing an $n$-by-$n$ array of edge-dislocations. We construct $\nu^{(n)}$ by gluing together properly rescaled versions of the form constructed in \secref{sec:example}. We then prove that $T_{\nu^{(n)}}$ converges as $n\to\infty$ to a 1-current $T_\beta$; the convergence is in the sense of weak convergence of currents. 
We interpret this limit theorem as a statement that every smooth distribution if dislocations is a limit, in the sense of weak convergence of currents, of singular dislocations.


In \secref{sec:torsion}, we generalize the analysis to the case where $\M$ is an $n$ dimensional manifold equipped with a full lattice structure, that is, a (possibly singular) frame field $\{e_i\}_{i=1}^n$. We cast in the setting of currents the convergence of parallel transport and torsion. In particular, we define the notion of singular torsion, and show that ``the emergence of torsion" as a limit of torsion-free connections, as exposed in \cite{KM15}, should be re-interpreted as a convergence of singular torsions to a limiting smooth torsion. 
Further extensions and concluding remarks are presented in \secref{sec:discussion}.

%
\section{De-Rham currents}
\label{sec:currents}

We start by reviewing the definition of de-Rham currents on manifolds, which are  fundamental objects representing singular material structures. 
For a full introduction see for example the classical monograph of Federer \cite{Fed69} or de-Rhams  \cite{deR84}. For more recent reviews see also \cite{LY02,KP08}. 
 
Let $\M$ be a smooth, compact, orientable $n$-dimensional manifold with boundary. For every $1\leq k\leq n$, let $\Omega^k(\M)$ denote the space of smooth $k$-forms on $\M$ and let
\[
\Omega_c^k(\M)=\left\{\omega\in \Omega ^k(\M) ~:~  \supp(\omega)\Subset \M  \right\}
\]
be the module of smooth  $k$-forms compactly-supported in $\M$. Choose a Riemannian metric $g$ on $\M$, and define for every compact $K\Subset \M$ a family of seminorms $\phi_{K,j}^k: \Omega_c^k(\M)\to\R^+$ by 
\[
\phi_{K,j}^k(\omega)=\sup_{0\leq i\leq j} \|D^i\omega\|_{K},                                                                                                                                                        
\] 
where $D^i\omega: \M \to \Hom(\otimes^i T\M,\Lambda^kT^*\M)$ is the $i$-th differential of $\omega$ (not to be confused with the exterior derivative), and 
\[
\|D^i \omega\|_{K}=\sup_{p\in K} \| (D^i\omega)_p\|,
\] 
where $\|\cdot\|$ is the norm  on $\Hom(\otimes^i T\M,\Lambda^k T^*\M)$ induced by the metric $g$. Since $\M$ is compact, a different choice of $g$ will give equivalent seminorms; as a result, it makes sense to say that a $k$-form is $C^j$-bounded without reference to any metric. The seminorms $\phi_{K,j}^k$ turn 
\[
\Omega_K^k(\M)=\{ \omega\in \Omega_c^k(\M) ~:~ \supp(\omega)\subset K  \}
\] 
into a Fr\'echet space, that is, a locally-convex topological vector space which is complete with respect to a translationally-invariant metric \cite[p. 9]{Rud91}. 

Endow $\Omega_c^k(\M)$ with the finest topology for which the inclusion maps 
\[
\Omega_K^k(\M)\hookrightarrow \Omega_c^k(\M)
\]
are continuous for all compact $K\Subset \M$.  It follows that a sequence $\omega_n\in \Omega_c^k(\M)$ converges in this topology to $0$ if and only if there exists a compact set $K\Subset \M$ such that $\supp(\omega_n)\subset K$ for all $n$, and $\omega_n\to 0$ in the topology of $\Omega_K^k(\M)$ described above. 

Finally, let $\scrD_k(\M)$ be the dual vector space of  continuous linear functionals on $\Omega_c^k(\M)$; the members of $\scrD_k(\M)$ are called \emph{de-Rham $k$-currents}. 
Equivalently, a linear functional $T:\Omega_c^k(\M)\to \R$ is a $k$-current if and only if there exists for every $K\Subset\M$ an $N=N(K)\in\mathbb{N}$ and a constant $C=C(K)>0$, such that for every $\omega\in \Omega_K^k(\M)$, 
\[
|T(\omega)|\leq C\, \sup_{1\leq j\leq N}\phi_{K,j}^k(\omega).
\]
We endow $\scrD_k(\M)$ with the weak-star topology: a sequence of $k$-currents $T_n$ converges to a $k$-current $T$ if 
\[
\lim_{n\to \infty}T_n(\omega)=T(\omega)
\]
for every $\omega \in \Omega_c^k(\M)$.
The \emph{support} of a $k$-current $T\in \scrD_k(\M)$ is defined by $\supp(T)=\M\setminus A(T)$, where $A(T)$ is the annihilation set of $T$, i.e., the union of all open subsets $U\subset\M$ for which $T(\alpha)=0$ whenever $\supp(\alpha)\subset U$.

For example, every locally-integrable $k$-form $\beta$ defines an $(n-k)$-current $T_\beta\in \scrD_{n-k}(\M)$ by 
\[
T_\beta(\alpha)=\int_\M \beta\wedge \alpha.
\] 
In other words, currents may be viewed as generalized differential forms. Currents also generalize the concept of a submanifold. Let $S\subset \M$ be a $k$-dimensional oriented submanifold, then $S$ induces a $k$-current $[S]$ given by 
\[
[S](\alpha)=\int_S\alpha,\quad \alpha\in \Omega_c^k(\M).
\]

The \emph{boundary operator} of a $k$-current is a map $\partial: \scrD_k(\M)\to \scrD_{k-1}(\M)$, defined by 
\[
\partial T(\alpha)=T(d\alpha),\quad \alpha\in \Omega_c^{k-1}(\M).
\]
Since $d^2=0$, it immediately follows that $\partial^2=0$;
moreover, 
it follows from integration by parts and Stokes theorem that 
\[
\partial T_{\beta}=(-1)^{k-1} T_{d\beta}
\] 
for every smooth $k$-form $\beta$, 


\section{Layering form for an edge-dislocation}
\label{sec:example}

As discussed in Epstein \cite[Section~4.5.3]{Eps10} and \cite{SE14},  a single differential 1-form is capable of capturing the presence of a dislocation. A covector $\omega$ in a vector space $V$ induces a family of hyperplanes (Bravais planes), 
\[
H_t=\{v\in V ~:~ \omega(v)=t\}
\] 
foliating $V$; the action of $\omega$ on a vector $v\in V$ can be viewed as ``the number of hyperplanes" intersected by the vector $v$. In the case of a smooth manifold $\M$,  given a  1-form $\nu$ and an oriented curve $C\subset \M$, the integral 
\[
\int_{C} \nu
\]
can be interpreted as the (signed) number of  $\nu$-hyperplanes intersected by $C$. Thus, a single 1-form $\nu$ on a manifold $\M$, can be viewed as representing a \emph{layering form}---a density of a family of parallel layers at each point.

A 1-form $\nu$  induces a smooth layering structure (foliation) for $\M$ if it is integrable; that is, if $\M$ can be foliated such that the tangent bundle of each leaf coincides with the kernel of $\nu$. It is well known that a sufficient and necessary condition for $\nu$ to induce a smooth layering structure is that 
\[
d\nu=\alpha\wedge \nu
\]
for some $(n-1)$-form $\alpha$ \cite[Chap.~19]{Lee12}. Note that for a simply-connected two-dimensional manifold, every non-vanishing 1-form induces a smooth layering structure.

If, in addition, the 1-form $\nu$ is closed, $d\nu=0$, then it follows from Stokes' theorem that for every simple, oriented, closed curve $C\subset\M$, the sum of all the hyperplanes intersected by $C$ vanishes,  
\beq
\label{eq:closed_nu}
\int_C \nu=\int_{\Sigma_C}d\nu=0,
\eeq
where $\Sigma_C\subset\M$ is any $2$ dimensional submanifold of $\M$ bounded by $C$.  In other words, there are no ``extra" layers, and the layering structure is defect-free.  Motivated by equation \eqref{eq:closed_nu}, we may interpret $d\nu$ as a defect density.   

Suppose in turn that $\nu$ is a 1-form corresponding to an  isolated dislocation concentrated on a hyper-surface $\Gamma\subset\M$. By \eqref{eq:closed_nu}, $d\nu=0$ on $\M\setminus\Gamma$, and consequently, $\nu$ must be singular at $\Gamma$.  

We next construct an explicit layering form on a two-dimensional manifold, which may represent a singular edge-dislocation in one family of Bravais planes. We first consider a topological rectangle, i.e., a manifold that can be parametrized as follows:
\[
\M=[0,1]^2  = \{(x,y) ~:~ 0\le x,y \le 1\}.
\]
We denote the left, right, top and bottom edges of $\M$ by $\Mleft,$ $\Mright$, $\Mtop$ and $\Mbottom$, respectively. 
The locus of the dislocation is a one-dimensional submanifold, with we take to be the closed parametric segment
\beq
\Gamma_a = [1/2 - a/2,  1/2 + a/2] \times \{1/2\} \subset \M,
\label{eq:Gamma}
\eeq
where $0<a<1$ is a parameter, which will be used later in our homogenization procedure.

%
%


\begin{proposition}
\label{prop:nu}
Let $\beta\in\Omega^1(\M)$ be a nowhere-vanishing 1-form and let  $0<a<1$.
Then, there exists a continuously differentiable 1-form $\nu_a$ on $\M\setminus \Gamma_a$ satisfying the following properties:
\begin{enumerate}[label=(\roman*)]
\item $\nu_a$ is  $C^1$-bounded (see definition in \secref{sec:currents}).
\item $\nu_a$ is closed.
\item $\nu_a$ coincides with $\beta$ on $\Mleft$ and $\Mright$.
\item $\nu_a$ has the same circulation as $\beta$, 
\[
\int_{\partial\M}\nu_a=\int_{\partial\M}\beta.
\]
\item The horizontal components of $\nu_a$ and $\beta$ coincide,
\[
\nu_a(\partial_x)=\beta(\partial_x), 
\]
whenever $|x-1/2|>a/2$.

%
\end{enumerate}
\end{proposition}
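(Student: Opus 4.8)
The plan is to exploit the topology of the punctured domain. Since $\Gamma_a$ is a compact arc lying in the interior of $\M$, the complement $\M\setminus\Gamma_a$ deformation-retracts onto a circle, so $H^1_{\mathrm{dR}}(\M\setminus\Gamma_a)\cong\R$, a generator of $H_1$ being any loop encircling $\Gamma_a$ once. It is therefore natural to look for $\nu_a$ in the form $d\psi+c\,\omega_0$, where $\psi$ is a smooth function on $\M$, $c\in\R$, and $\omega_0$ is a single fixed closed $1$-form of nonzero period about $\Gamma_a$. I would then: (a) construct $\omega_0$ so that it is $C^1$-bounded, its horizontal component is supported in the strip $\{|x-1/2|\le a/2\}$, and it vanishes on $\Mleft\cup\Mright$; (b) choose $\psi$ so as to recover the boundary values of $\beta$ on the left and right edges and the correct horizontal component outside the strip; and (c) use the circulation identity (iv) to pin down $c$. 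Write $\beta=p\,dx+q\,dy$ with $p,q\in C^\infty(\M)$, and put $x_L:=\tfrac12-\tfrac a2$, $x_R:=\tfrac12+\tfrac a2$, so that $\Gamma_a=[x_L,x_R]\times\{1/2\}$. (Note that the hypothesis that $\beta$ is nowhere-vanishing will play no role in (i)--(v); it would matter only if one additionally insisted that $\nu_a$ itself be nowhere-vanishing, which the statement does not ask for.)

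For (a): fix a cutoff $\eta\in C^\infty([0,1])$ with $\eta\equiv0$ on $[0,x_L]$ and $\eta\equiv1$ on $[x_R,1]$, let $\Theta(y)$ be the Heaviside step ($\Theta\equiv1$ for $y>1/2$, $\Theta\equiv0$ for $y<1/2$), and set $\omega_0:=\eta'(x)\,\Theta(y)\,dx$ on $\M\setminus\Gamma_a$. At every point of $\M\setminus\Gamma_a$ either $\Theta$ is locally constant (then $\omega_0$ is locally a constant multiple of $\eta'(x)\,dx$, hence smooth) or $\eta'$ vanishes identically nearby (this is the case at points of $\{y=1/2\}\setminus\Gamma_a$, where $x\notin[x_L,x_R]$; there $\omega_0\equiv0$). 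Hence $\omega_0$ is $C^\infty$, $C^1$-bounded and closed on $\M\setminus\Gamma_a$, its only discontinuity locus being $\Gamma_a$. Its circulation is $\int_{\partial\M}\omega_0=\int_1^0\eta'(x)\,dx=\eta(0)-\eta(1)=-1$ (only the top edge contributes, since $\omega_0$ has no $dy$-component and $\Theta$ vanishes on the bottom edge), so its period about $\Gamma_a$ is nonzero; moreover $\omega_0(\partial_x)=\eta'(x)\Theta(y)=0$ whenever $|x-1/2|>a/2$, and $\omega_0=0$ on $\Mleft$ and $\Mright$ because $\eta'(0)=\eta'(1)=0$.

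For (b): let $\psi_L(x,y):=\int_0^y q(0,t)\,dt+\int_0^x p(s,y)\,ds$ and $\psi_R(x,y):=\int_0^y q(1,t)\,dt+\int_1^x p(s,y)\,ds$ (the line integrals of $\beta$ along L-shaped paths based at the two lower corners); both are smooth on $\M$, and a one-line computation gives $\partial_x\psi_L=\partial_x\psi_R=p$ together with $d\psi_L|_{\Mleft}=\beta|_{\Mleft}$ and $d\psi_R|_{\Mright}=\beta|_{\Mright}$. Reusing the cutoff, put $\psi:=\psi_L+\eta(x)\brk{\psi_R-\psi_L}\in C^\infty(\M)$, so that $\psi=\psi_L$ on $\{x\le x_L\}$ and $\psi=\psi_R$ on $\{x\ge x_R\}$; hence $\partial_x\psi=p$ for $|x-1/2|>a/2$ and $d\psi=\beta$ on $\Mleft\cup\Mright$. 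Finally set
\[
\nu_a:=d\psi-\Brk{\int_{\partial\M}\beta}\,\omega_0\qquad\text{on }\M\setminus\Gamma_a .
\]
Then $\nu_a$ is $C^1$-bounded ($d\psi$ is smooth with bounded derivatives on the compact $\M$, and $\omega_0$ is $C^1$-bounded) and closed ($d(d\psi)=0$ and $d\omega_0=0$ on $\M\setminus\Gamma_a$), giving (i)--(ii); $\nu_a=d\psi=\beta$ on $\Mleft\cup\Mright$ since $\omega_0$ vanishes there, giving (iii); $\nu_a(\partial_x)=\partial_x\psi=p=\beta(\partial_x)$ for $|x-1/2|>a/2$ since $\omega_0(\partial_x)=0$ there, giving (v); and $\int_{\partial\M}\nu_a=\int_{\partial\M}d\psi-\brk{\int_{\partial\M}\beta}\int_{\partial\M}\omega_0=0-\brk{\int_{\partial\M}\beta}\cdot(-1)=\int_{\partial\M}\beta$, using $\int_{\partial\M}d\psi=0$ (Stokes on $\M$), giving (iv).

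The step carrying the content is (a): one needs a \emph{bounded} closed $1$-form on the slit domain $\M\setminus\Gamma_a$ representing a nonzero de Rham class. The obvious candidates --- angle forms based at the two endpoints of $\Gamma_a$ --- fail on both counts: they blow up like $1/r$ at those endpoints and, regarded as currents, would concentrate the defect at the endpoints rather than along $\Gamma_a$. The device is to route the jump across the slit into the component \emph{tangent} to $\Gamma_a$ (the $dx$-component), which costs nothing in regularity; the only delicate verification is that $\eta'(x)\Theta(y)\,dx$ is genuinely $C^1$ on $\M\setminus\Gamma_a$ with discontinuity set exactly $\Gamma_a$, which works precisely because $\supp\eta'\subset[x_L,x_R]$. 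Everything else reduces to bookkeeping with line integrals of $\beta$.
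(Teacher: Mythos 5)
Your construction is correct: $\omega_0=\eta'(x)\,\Theta(y)\,dx$ is indeed smooth, closed and $C^1$-bounded on $\M\setminus\Gamma_a$ with period $-1$, the primitives $\psi_L,\psi_R$ have the stated boundary behaviour, and properties (i)--(v) follow exactly as you say (your remark that the nowhere-vanishing hypothesis is not needed for (i)--(v) is also accurate). The route, however, is organized differently from the paper's. The paper builds $\nu_a=df$ from a \emph{single discontinuous potential} $f$: it integrates $\beta$ counterclockwise along $\partial\M$ from a base point $q_0=(1,1/2)$, extends horizontally inward off the strip, and interpolates across the strip between second-order Taylor polynomials $p_L,p_R$ with a profile $r\brk{\tfrac{x-1/2}{a}}$; the circulation then enters because $f_0$ (hence $p_R-p_L$) jumps across $y=1/2$, producing the tangential discontinuity $\tfrac1a r'\brk{\tfrac{x-1/2}{a}}\int_{\partial\M}\beta\,dx$ along $\Gamma_a$. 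You instead split $\nu_a=d\psi-\brk{\int_{\partial\M}\beta}\omega_0$ with $\psi$ \emph{globally smooth} and the entire topological obstruction isolated in one explicit generator of $H^1_{\mathrm{dR}}(\M\setminus\Gamma_a)$; the jump $-\brk{\int_{\partial\M}\beta}\eta'(x)\,dx$ is inserted by hand via $\Theta$ rather than inherited from a branch cut in the potential. The key device --- routing the jump into the component tangent to $\Gamma_a$, weighted by the derivative of a cutoff so that it stays bounded and integrates to the circulation --- is the same in both proofs, but your decomposition is more modular (no Taylor expansion, no discontinuity ray to track, and the exact-plus-generator structure makes closedness and the circulation count immediate), while the paper's explicit formulas \eqref{eq:dfcalc1}--\eqref{eq:dfcalc} are what it later reuses verbatim for the $C^1$ gluing across tiles and the quantitative $O(1/n)$ estimates in the homogenization theorem; if you wanted your version to feed into that argument you would need to record the analogous explicit expressions for $d\psi$ and the jump.
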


Before proving \propref{prop:nu}, we show in which sense the 1-form $\nu_a$ represents a family of Bravais planes dislocated along the segment $\Gamma_a$. 
Since $\nu_a$ is closed in $\M\setminus\Gamma_a$, it follows that
\[
\oint_C \nu_a = 0
\]
along every contractible loop $C$ in $\M\setminus\Gamma_a$. 

Let $\g$ be a metric on $\M$, and
denote by $\Gamma_a^\e$, $\e>0$, a family of $\e$-tubular neighborhoods of $\Gamma_a$. 
By Stokes' law, for every small enough $\e>0$,
\[
\begin{split}
0 &= \int_{\M\setminus\Gamma_a^\e} d\nu_a  = \int_{\partial\M} \nu_a - \int_{\partial\Gamma_a^\e} \nu_a.
\end{split}
\]
Since $\nu_a$ has the same circulation as  $\beta$, 
\[
\int_{\partial\Gamma_a^\e} \nu_a =  \int_{\partial\M}\beta.
\]
Letting $\e\to0$, we obtain 
\beq
\int_{\Gamma_a}[\nu_a] =  \int_{\partial\M}\beta,
\label{eq:circ_singularity}
\eeq
where $[\nu_a]$ is the discontinuity jump of $\nu_a$ along $\Gamma_a$, whose sign is determined by the orientation of $\M$ (hence of  $\Gamma_a^\e$) and $\Gamma_a$. Note that the $1$-sided limits of $\nu_a$ at $\Gamma_a$ exist since $\nu_a$ is $C^1$-bounded. Moreover, since $\M$ is compact, the identity \eqref{eq:circ_singularity} does not depend on the choice of the metric $\g$.

Thus, the defining properties of $\nu_a$ imply that it does not satisfy the integral version \eqref{eq:closed_nu} of closedness, and as a result, must have a singularity along $\Gamma_a$.

\begin{remark}
The singular set $\Gamma_a$ of $\nu_a$ is evidently uncountable. 
Generally, if $\M$ is a compact two-dimensional manifold with or without boundary, $\Gamma$ is a submanifold of $\M$, and $\nu$ is a $C^0$-bounded closed 1-form on $\M\setminus\Gamma$, such that there exists a closed curve $C$ for which
\[
\oint_C \nu \ne 0,
\]
then $\Gamma$ cannot be a finite set.
Suppose, by contradiction that $\Gamma=\{p_1,p_2,\ldots,p_k\}$ is  finite, and assume without loss of generality that all the points in $\Gamma$ are enclosed by the curve $C$.  Assuming as above a metric $\g$, setting 
$\Gamma^\e=\cup_i B_\e(p_i)$,  and performing the same calculation,
\[
\sum_{i=1}^k \oint_{\partial B_\e(p_i)} \nu= -\oint_C \nu.
\]
If $\nu$ is bounded, then the left-hand side vanishes as $\e\to0$, yielding a contradiction. 
The physical interpretation of this observation is that \emph{there is no such thing as an edge-dislocation supported at a point} (or on a line in three dimensions).
\end{remark}

\begin{proof}[of \propref{prop:nu}]\smartqed
We construct $\nu_a$ as the differential of a discontinuous function $f$. First, define $f_0:\partial\M\to \R$ by fixing $q_0=(1,1/2)$ and letting
\[
f_0(q)=\int_{q_0}^q\beta,
\]
where the integration from $q_0$ to $q$ is along $\partial\M$ counterclockwise. If the circulation of $\beta$ is non-zero, then $f_0$ is discontinuous at $q_0$. However, its differential is well-defined and smooth at $q_0$ as it coincides with the tangential component of $\beta$. 

Next, let 
\[
\M_a = [1/2 - a/2,  1/2 + a/2] \times [0,1],
\] 
and define $\bar{f}:\M\setminus\M_a\to \R$ by integrating $\beta$ horizontally, from the boundaries inward,
\[
\bar{f}(x,y)=
\Cases{
f_0(0,y)+\int_{[(0,y),(x,y)]}\beta,\quad & x<1/2-a/2 \\
f_0(1,y)+\int_{[(1,y),(x,y)]}\beta,\quad & x>1/2+a/2.
}
\]

Denote by $p_L,p_R:\M\to\R$ the second-order Taylor expansions of $\bar{f}$ about $x_L= 1/2-a/2$ and $x_R= 1/2+a/2$ along the $x$-direction, i.e.,
\[
\begin{aligned}
p_L(x,y) = \bar{f}(x_L,y) +  \pd{\bar{f}}{x}(x_L,y) (x-x_L) + \half \pdd{\bar{f}}{x}(x_L,y) (x-x_L)^2 \\
p_R(x,y) = \bar{f}(x_R,y) +  \pd{\bar{f}}{x}(x_R,y) (x-x_R) + \half \pdd{\bar{f}}{x}(x_R,y) (x-x_R)^2.
\end{aligned}
\]

Let $r \in C^\infty(\R)$ be a monotonically-increasing function satisfying,
\[
r(t) = 0\quad \forall t\le -1/2
\Textand
r(t) = 1 \quad \forall t\ge 1/2.
\]
We extend $\bar{f}$ to $\M\setminus\Gamma_a$ by interpolating between $p_L$ and $p_R$, using the smooth ``connecting" function $r$ (see \figref{fig:2}),
\beq
f(x,y)=
\Cases{
\bar{f}(x,y) & |x-1/2|\geq a/2 \\
(1-r(\frac{x-1/2}{a})) p_L(x,y) 
+r(\frac{x-1/2}{a}) p_R(x,y)    & |x-1/2|<a/2.
}
\label{eq:f_cases}
\eeq

\begin{figure}
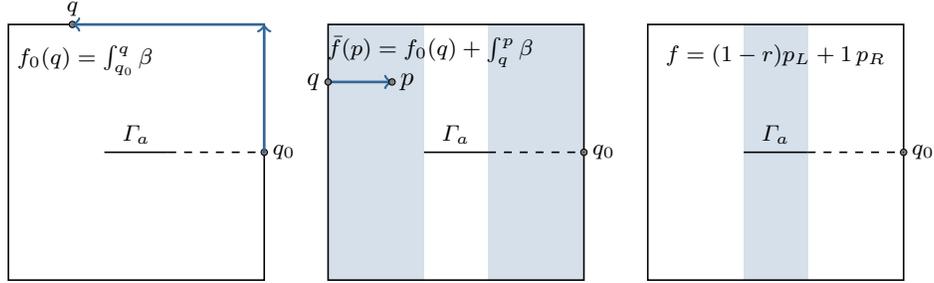

\begin{center}
\btkz[scale=1.7]
	\tkzDefPoint(-1,-1){A}
	\tkzDefPoint(1,-1){B}
	\tkzDefPoint(1,1){C}
	\tkzDefPoint(-1,1){D}
	\tkzDefPoint(1,0){O}
	\tkzDefPoint(-0.5,1){Q}
	\tkzDefPoint(-0.25,0){p}
	\tkzDefPoint(0.25,0){q}
	\tkzDrawPolygon(A,B,C,D)
	\tkzDrawPoints(O,Q)
	\tkzDrawSegment(p,q)
	\tkzDrawSegment[dashed](q,O)
	\tkzLabelSegment[above](p,q){$\Gamma_a$}
	\tkzText[right](1,0){$q_0$}
	\tkzText[above](-0.5,1){$q$}
	\tkzText[below](-0.4,0.9){$f_0(q)=\int_{q_0}^q\beta$}
	\draw[->,color=warmblue, line width = 1pt] (1,0) -- (1,1);
	\draw[->,color=warmblue, line width = 1pt] (1,1) -- (-0.5,1);
	
	\begin{scope}[xshift=2.5cm] 
	\fill[color=warmblue!20] (-1,-1) -- (-0.25,-1) -- (-0.25,1) -- (-1,1) -- cycle;
	\fill[color=warmblue!20] (1,-1) -- (0.25,-1) -- (0.25,1) -- (1,1) -- cycle;
	\tkzDefPoint(-1,-1){A}
	\tkzDefPoint(1,-1){B}
	\tkzDefPoint(1,1){C}
	\tkzDefPoint(-1,1){D}
	\tkzDefPoint(1,0){O}
	\tkzDefPoint(-0.5,1){Q}
	\tkzDefPoint(-0.25,0){p}
	\tkzDefPoint(0.25,0){q}
	\tkzDrawPolygon(A,B,C,D)
	\tkzDrawPoints(O)
	\tkzDrawSegment(p,q)
	\tkzDrawSegment[dashed](q,O)
	\tkzLabelSegment[above](p,q){$\Gamma_a$}
	\tkzText[right](1,0){$q_0$}
	\tkzDefPoint(-1,0.55){q1}
	\tkzDefPoint(-0.5,0.55){q2}
	\tkzDrawSegment[color=warmblue,->, line width=1pt](q1,q2)
	\tkzDrawPoints(q1,q2)
	\tkzLabelPoint[left](q1){$q$}
	\tkzLabelPoint[right](q2){$p$}
	\tkzText[above](-0.2,0.60){$\bar{f}(p) = f_0(q) + \int_q^p\beta$}
	\end{scope}

	\begin{scope}[xshift=5cm] 
	\fill[color=warmblue!20] (-0.25,-1) -- (0.25,-1) -- (0.25,1) -- (-0.25,1) -- cycle;
	\tkzDefPoint(-1,-1){A}
	\tkzDefPoint(1,-1){B}
	\tkzDefPoint(1,1){C}
	\tkzDefPoint(-1,1){D}
	\tkzDefPoint(1,0){O}
	\tkzDefPoint(-0.5,1){Q}
	\tkzDefPoint(-0.25,0){p}
	\tkzDefPoint(0.25,0){q}
	\tkzDrawPolygon(A,B,C,D)
	\tkzDrawPoints(O)
	\tkzDrawSegment(p,q)
	\tkzDrawSegment[dashed](q,O)
	\tkzLabelSegment[above](p,q){$\Gamma_a$}
	\tkzText[right](1,0){$q_0$}
	\tkzText[above](0,0.60){$f = (1-r)p_L + 1\, p_R$}
	\end{scope}
\etkz
\end{center}
\caption{The three stages in the construction of $f$: first $f_0$ is defined on $\partial\M$; next $\bar{f}$ is defined on the set $|x-1/2| > a/2$ by integrating the horizontal component of $\beta$ from the nearest vertical boundary; finally, $f$ is extended to the set $|x-1/2| \le a/2$ by interpolation. The dashed segment connecting $\Gamma_a$ to $q_0$ is the discontinuity line of $f$.}
\label{fig:2}
\end{figure}

We turn to evaluate $\nu_a = df$ by differentiating \eqref{eq:f_cases}.
For $x>a/2+1/2$,
\beq
\label{eq:dfcalc1}
\begin{split}
df_{(x,y)} & = \brk{\pd{}{x}\brk{\int_{[(1,y),(x,y)]}\beta}} dx + \brk{\pd{f_0}{y}(1,y)+\pd{}{y}\brk{\int_{[(1,y),(x,y)]}\beta}}dy \\
&= \beta_1(x,y) \, dx + \brk{\beta_2(1,y)+\int_1^x\pd{\beta_1}{y}(x',y)dx'}dy,
\end{split}
\eeq
where $\beta_1$ and $\beta_2$ are the components of $\beta$,
\[
\beta=\beta_1\,dx+\beta_2 \,dy.
\]
Similarly, for $x<1/2-a/2$,
\beq
df_{(x,y)}=\beta_1(x,y)\, dx + \brk{\beta_2(0,y)+\int_0^x\pd{\beta_1}{y}(x',y)dx'}dy.
\label{eq:dfcalc1b}
\eeq
While $f$ has a discontinuity along the segment $[1/2 + a, 1] \times \{1/2\}$, its one-sided derivatives along this segment  are continuous, as they are expressed in terms of the smooth 1-form $\beta$. Moreover,
\[
df|_{\Mleft} =  \beta|_{\Mleft} 
\Textand
df|_{\Mright} =  \beta|_{\Mright},
\]
proving Property~(iii). Likewise, for $|x-1/2| \ge a/2$,
\[
df(\partial_x) = \beta_1 = \beta(\partial_x),
\]
proving Property~(v).

For $(x,y)\in\M_a$, 
\beq
\label{eq:dfcalc}
\begin{split}
df_{(x,y)} &= \frac{1}{a}r'\brk{\tfrac{x-1/2}{a}}(p_R(x,y)-p_L(x,y))dx \\
&+\Brk{\brk{1-r\brk{\tfrac{x-1/2}{a}}}\pd{p_L}{x}(x,y)+r\brk{\tfrac{x-1/2}{a}} \pd{p_R}{x}(x,y)}dx\\
&+\Brk{\brk{1-r\brk{\tfrac{x-1/2}{a}}}\pd{p_L}{y}(x,y)+r\brk{\tfrac{x-1/2}{a}} \pd{p_R}{y}(x,y)}dy.
\end{split}
\eeq
The 1-form $df$ is continuous at $x= 1/2\pm a/2$, for example,
\[
\begin{split}
\lim_{x\nearrow 1/2+a/2} df(x,y) &= \pd{p_R}{x}(1/2+a/2,y)\, dx+ \pd{p_R}{y}(1/2+a/2,y)\, dy \\
&=\pd{\bar{f}}{x}(1/2+a/2,y)\, dx+\pd{\bar{f}}{y}(1/2+a/2,y)\, dy \\
&= d\bar{f}(1/2+a/2,y).
\end{split}
\]
A second differentiation shows that $\nu_a$ is continuously-differentiable at $x= 1/2\pm a/2$. 
This together with \eqref{eq:dfcalc} proves Property~(i) and consequently also Property~(ii). 

It remains to {prove Property~(iv), that $df$ and $\beta$ have the same circulations. 
This follows from our construction of $f_0$ on $\partial\M$, 
\[
\begin{split}
\int_{\partial\M} df &= \lim_{\e\to0} \brk{f(1,1/2-\e) - f(1,1/2+\e)} \\
&= \lim_{\e\to0} \brk{f_0(1,1/2-\e) - f_0(1,1/2+\e)} \\
&= \int_{\partial\M} \beta.
\end{split}
\]
}

\qed\end{proof}

The 1-form $\nu_a$ (which is only defined on $\M\setminus\Gamma_a$) induces a 1-current on $\M$,
\[
T_{\nu_a}(\alpha)=\int_\M \nu_a\wedge \alpha
\qquad 
\alpha\in\Omega_c^1(\M).
\]
Its boundary  is the 0-current,
\[
\partial T_{\nu_a}(f) = T_{\nu_a}(df) = \int_\M \nu_a\wedge df
\qquad 
f\in C_c^\infty(\M).
\]
Integrating by parts, we obtain
\[
\partial T_{\nu_a}(f) =\int_{\Gamma_a} f [\nu_a],
\]
where for $|x-1/2| < a/2$,
\[
\begin{split}
[\nu_a](x) &= \lim_{\e\to0} \brk{df(x,1/2+\e) - df(x,1/2-\e)} \\
&=\frac{1}{a}r'\brk{\tfrac{x-1/2}{a}} \,  \lim_{\e\to0} \brk{p_R(x,1/2+\e)- p_R(x,1/2-\e)} \\
&=  \frac{1}{a}r'\brk{\frac{x-1/2}{a}}\, \int_{\partial\M} \beta.
\end{split}
\]

To conclude, we view $\nu_a$ as a layering form on $\M$ having an edge-dislocation concentrated on the hyper-surface $\Gamma_a$. The locus of the dislocation is revealed by the boundary of the differential current induced by $\nu_a$. Note that $\M\setminus\Gamma_a$ is defect-free only to the extent detectable by $\nu_a$. Generally, $\M\setminus\Gamma_a$ may contain defects detected by other layering forms.

\section{Homogenization of distributed edge-dislocations}
\label{sec:homogenization}

We proceed to construct a singular layering form corresponding to an $n$-by-$n$ array of edge-dislocations, each of magnitude of order $1/n^2$, using \propref{prop:nu} as a building block.

For $(x_0,y_0)\in\R^2$, denote by $\tau_{(x_0,y_0)}:\R^2\to\R^2$ the translation operator
\[
\tau_{(x_0,y_0)}(x,y) = (x+x_0,y+y_0).
\]
Likewise, for $\lambda>0$, denote by $S_\lambda:\R^2\to \R^2$ the scaling operator
\[
S_\lambda(x,y) = (\lambda x,\lambda y).
\] 

Let $n\in\bbN$ be given; for every $0\le k,j <  n$, let
\[
\calM^{(n)}_{kj} = S_{1/n} \circ \tau_{(k,j)}  (\M)
\]
be translated and rescaled copies of $\M$, forming an $n$-by-$n$ tiling of $\M$. 
By construction, 
\beq
\label{eq:iota}
\iota_{kj}^{(n)} =   S_{1/n} \circ \tau_{(k,j)} : \M \to \M^{(n)}_{kj} 
\eeq
is a diffeomorphism (see \figref{fig:n-by-n}).
Similarly, let
\[
\Gamma^{(n)}_{kj} = \iota_{kj}^{(n)}(\Gamma_{a/n})
\]
be segments of lengths $a/n^2$ located at the centers of each square.
Finally, denote by
\[
\Gamma^{(n)} =  \bigcup_{k,j=0}^{n-1} \Gamma^{(n)}_{jk},
\]
the union of those segments and note that $|\Gamma^{(n)}|=a$.

\begin{figure}
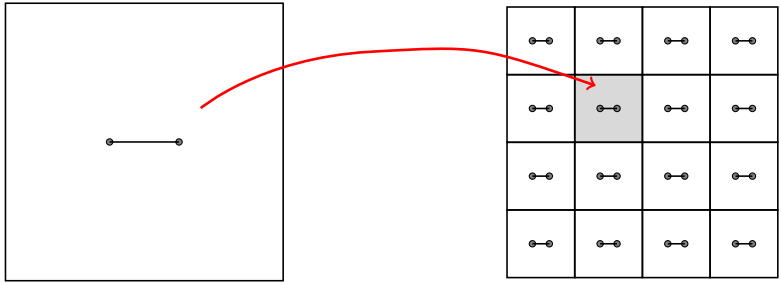

\begin{center}
\btkz[scale=1.5]
	\begin{scope}[yshift = 0.2cm, scale=1.23]	
	\tkzDefPoint(-1,-1){A}
	\tkzDefPoint(1,-1){B}
	\tkzDefPoint(1,1){C}
	\tkzDefPoint(-1,1){D}
	\tkzDefPoint(1,0){O}
	\tkzDefPoint(-0.25,0){p}
	\tkzDefPoint(0.25,0){q}
	\tkzDrawPolygon(A,B,C,D)
	\tkzDrawPoints(p,q)
	\tkzDrawSegment(p,q)
	\end{scope}

	\foreach \x in {1} {
	\foreach \y in {2}{
	\begin{scope}[xshift=100+0.6*\x cm, yshift = -20 + 0.6*\y cm, scale=0.3]
		\tkzDefPoint(-1,-1){A}
		\tkzDefPoint(1,-1){B}
		\tkzDefPoint(1,1){C}
		\tkzDefPoint(-1,1){D}
		\tkzDefPoint(1,0){O}
		\tkzFillPolygon[gray!30](A,B,C,D)
	\end{scope}
	}}
	
	\foreach \x in {0,1,2,3} {
	\foreach \y in {0,1,2,3}{
	\begin{scope}[xshift=100+0.6*\x cm, yshift = -20 + 0.6*\y cm, scale=0.3]
		\tkzDefPoint(-1,-1){A}
		\tkzDefPoint(1,-1){B}
		\tkzDefPoint(1,1){C}
		\tkzDefPoint(-1,1){D}
		\tkzDefPoint(1,0){O}
		\tkzDefPoint(-0.25,0){p}
		\tkzDefPoint(0.25,0){q}
		\tkzDrawPolygon(A,B,C,D)
		\tkzDrawPoints(p,q)
		\tkzDrawSegment(p,q)
	\end{scope}
	}}
	
	\draw [<-, line width=1pt, red] plot [smooth, tension=1] coordinates {(4,0.7) (2,1) (0.5,0.5)};
\etkz
\end{center}
\caption{The diffeomorphism $\iota_{kj}^{(n)}$ for $n=4$, $k=1$ and $j=2$}
\label{fig:n-by-n}
\end{figure}

Let $\beta\in\Omega^1(\M)$ be a layering form. We approximate it by a sequence of singular layering forms, 
\[
\nu^{(n)} \in \Omega^1(\M\setminus\Gamma^{(n)}).
\]

Let 
\beq
\beta^{(n)}_{kj}=(\iota^{(n)}_{kj})^\star \beta|_{\M^{(n)}_{kj}}\in \Omega^1(\M),
\label{eq:beta_nkj}
\eeq
be the pullback of $\beta$ (restricted to $\M^{(n)}_{kj}$) to $\M$ and let $\mu^{(n)}_{kj}\in\Omega^1(\M\setminus\Gamma_{a/n})$ be the singular 1-form defined in \propref{prop:nu}, with $\beta^{(n)}_{kj}$ playing the role of $\beta$. Pushing forward into $\M^{(n)}_{kj}$, we set
\beq
\label{eq:local_nu_n}
\nu^{(n)}|_{\M^{(n)}_{kj}} = (\iota_{kj}^{(n)})_\star\mu_{kj}^{(n)}.
\eeq





\begin{proposition}
Equation \eqref{eq:local_nu_n} for $0\le k,j<n$ defines a 
1-form $\nu^{(n)}$ on $\M$, satisfying 
\begin{enumerate}[label=(\roman*)]
\item $\nu^{(n)}$ is $C^1$-bounded.
\item $\nu^{(n)}$ is closed.
\item $\nu^{(n)}$ has the same circulation as $\beta$ in each sub-domain: for every $0\leq k,j\leq n-1$,
\[
\int_{\partial\M^{(n)}_{kj}}\nu^{(n)}=\int_{\partial\M^{(n)}_{kj}}\beta.
\]
\item $\nu^{(n)}$ coincides with $\beta$ on the vertical segments $L_k=\{\frac{k}{n} \}\times [0,1]$ for $0\leq k\leq n$. 
\end{enumerate}
\end{proposition}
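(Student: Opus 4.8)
The plan is to check that the locally-defined forms \eqref{eq:local_nu_n} agree on the overlaps of the tiling $\{\M^{(n)}_{kj}\}$, and then to transplant properties (i)--(iv) from \propref{prop:nu}. Write $\beta=b_1\,dx+b_2\,dy$ in the global coordinates of $\M$, and let
\[
F_{kj}=f^{(n)}_{kj}\circ(\iota^{(n)}_{kj})^{-1}:\M^{(n)}_{kj}\setminus\Gamma^{(n)}_{kj}\longrightarrow\R
\]
be the transplant to $\M^{(n)}_{kj}$ of the potential $f$ produced in the proof of \propref{prop:nu} from $\beta^{(n)}_{kj}$, so that $\nu^{(n)}|_{\M^{(n)}_{kj}}=dF_{kj}$. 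Because $\iota^{(n)}_{kj}$ is an orientation-preserving diffeomorphism of the compact manifold $\M$ onto $\M^{(n)}_{kj}$ carrying $\partial\M$ onto $\partial\M^{(n)}_{kj}$, and $\beta^{(n)}_{kj}=(\iota^{(n)}_{kj})^\star(\beta|_{\M^{(n)}_{kj}})$, the whole of \propref{prop:nu} transplants verbatim: $dF_{kj}$ is a $C^1$-bounded closed $1$-form on $\overline{\M^{(n)}_{kj}}\setminus\Gamma^{(n)}_{kj}$, it coincides with $\beta$ on the left and right edges of the tile, it has the same circulation as $\beta$ along $\partial\M^{(n)}_{kj}$, and by \propref{prop:nu}(v) it agrees with $\beta$ in the $x$-direction outside the central vertical strip of width $a/n^2$ of the tile.

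The heart of the proof is the compatibility of these pieces along the common edges of the tiling; once that is settled, compatibility at the corners follows by continuity. Along a shared \emph{vertical} edge $\{X=k/n\}$ both adjacent tiles lie, near the edge, outside their central strips (which sit at horizontal distance $(n-a)/(2n^2)>0$ from it), where pushing forward \eqref{eq:dfcalc1} and \eqref{eq:dfcalc1b} expresses $dF_{kj}$ and $dF_{(k-1)j}$ by the \emph{same} explicit $1$-form $b_1\,dX+\bigl(b_2|_{\{X=k/n\}}+\int_{k/n}^{X}\partial_Y b_1\,dX'\bigr)dY$, which is smooth across the edge; hence $\nu^{(n)}$ is (smoothly) well-defined there and equals $\beta$ on $\{X=k/n\}$. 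Along a shared \emph{horizontal} edge $E$, say between $\M^{(n)}_{kj}$ and $\M^{(n)}_{k(j-1)}$, the central strips do cross $E$, so a finer argument is needed: the transplanted boundary potentials $F_0^{kj}$ and $F_0^{k(j-1)}$ restrict to \emph{smooth} functions on $E$ (the only jump point $q_0$ of either transplanted $f_0$ lies on a vertical edge of its tile, not on $E$), both with differential $b_1|_E\,dX$, so they differ on the connected set $E$ by a single constant $c$. Carrying this through the construction \eqref{eq:f_cases} shows $\bar F_{kj}=\bar F_{k(j-1)}+c$ in the non-central part of a neighborhood of $E$, hence the Taylor polynomials in \eqref{eq:f_cases} satisfy $p_L^{kj}=p_L^{k(j-1)}+c$ and $p_R^{kj}=p_R^{k(j-1)}+c$, hence $F_{kj}=F_{k(j-1)}+c$ on a full neighborhood of $E$ in $\M\setminus\Gamma^{(n)}$; therefore $dF_{kj}=dF_{k(j-1)}$ near $E$, and since each is $C^1$ up to $E$ from its own side, the pieces glue across $E$ into a $C^1$ form.

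With well-definedness and $C^1$-regularity in hand, the four properties follow quickly. For (i): $\nu^{(n)}$ is assembled from the $n^2$ pieces $dF_{kj}$, each $C^1$-bounded (\propref{prop:nu}(i)) and glued $C^1$-wise, and $C^1$-boundedness is stable under the $\iota^{(n)}_{kj}$ on the compact $\M$. For (ii): closedness is local and each $dF_{kj}$ is closed by \propref{prop:nu}(ii). For (iii): naturality of the pullback under the orientation-preserving diffeomorphism $\iota^{(n)}_{kj}$ gives
\[
\int_{\partial\M^{(n)}_{kj}}\nu^{(n)}=\int_{\partial\M}\mu^{(n)}_{kj}=\int_{\partial\M}\beta^{(n)}_{kj}=\int_{\partial\M^{(n)}_{kj}}\beta,
\]
where the middle equality is \propref{prop:nu}(iv). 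For (iv): each $L_k$ is a left edge of the tiles of column $k$, a right edge of those of column $k-1$ (for $0<k<n$), or an outer edge of $\M$ (for $k\in\{0,n\}$), and $\nu^{(n)}=\beta$ on such edges by \propref{prop:nu}(iii), consistently from both sides.

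I expect the only genuinely delicate step to be the horizontal-edge gluing: since the central-strip construction in \propref{prop:nu} uses Taylor truncations, $F_{kj}$ and $F_{k(j-1)}$ need not each reduce to the boundary potential on $E$, and one must check that they differ by one and the same constant all along $E$, including across the central strip. This hinges precisely on the circulation integral $\int_E\beta$ being seen identically from either tile, which in turn is why \propref{prop:nu} seeds $\bar f$ symmetrically from the two vertical boundaries.
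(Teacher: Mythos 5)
Your proof is correct and follows essentially the same route as the paper's: both reduce everything to the observation that the local construction of $\mu^{(n)}_{kj}$, together with its first-order boundary data on each side of $\partial\M$, is determined by $\beta^{(n)}_{kj}$ and its derivatives on that side, which match across adjacent tiles since $\beta$ is globally smooth; the four properties then transplant directly from \propref{prop:nu}. Your "potentials differ by a constant along a shared horizontal edge" packaging is just a mild reformulation of the paper's matching of $\mu^{(n)}_{kj}$ and $\calL_w\mu^{(n)}_{kj}$ on the skeleton, not a different argument.
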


\begin{proof}\smartqed
We first show that $\nu^{(n)}$ is well-defined and satisfies Property~(i). It is obviously smooth in the interior of each $\M^{(n)}_{kj}\setminus \Gamma^{(n)}_{kj}$. It remains to prove that it is continuously-differentiable on the ``skeleton" $\cup_{k,j}\partial\M^{(n)}_{kj}$. Note that
\[
\partial\M^{(n)}_{kj} = \iota^{(n)}_{kj}(\Mleft) \cup \iota^{(n)}_{kj}(\Mright) \cup \iota^{(n)}_{kj}(\Mtop) \cup \iota^{(n)}_{kj}(\Mbottom). 
\]

By \eqref{eq:beta_nkj}, since the diffeomorphism $\iota^{(n)}_{kj}$ is a combination of a translation and a scaling,  
\[
\beta^{(n)}_{kj}(\partial_x) =\frac{1}{n} \beta(\partial_x) \circ \iota^{(n)}_{kj}
\Textand
\beta^{(n)}_{kj}(\partial_y) =\frac{1}{n} \beta(\partial_y) \circ \iota^{(n)}_{kj},
\]
which are equalities between functions on $\M$. In particular, for every $x,y\in[0,1]$, and $v \in \{\partial_x,\partial_y\}$
\[
\begin{gathered}
\beta^{(n)}_{k\,j+1}(v)(x,0) = \beta^{(n)}_{kj}(v)(x,1) \\
\beta^{(n)}_{k+1\,j}(v)(0,y) = \beta^{(n)}_{kj}(v)(1,y) \\
\end{gathered}
\]
By the same argument, for $w\in \{\partial_x,\partial_y\}$
\[
\begin{gathered}
\calL_w \beta^{(n)}_{k\,j+1}(v)(x,0) = \calL_w \beta^{(n)}_{kj}(v)(x,1) \\
\calL_w\beta^{(n)}_{k+1\,j}(v)(0,y) = \calL_w\beta^{(n)}_{kj}(v)(1,y).
\end{gathered}
\]
By \eqref{eq:dfcalc1}, \eqref{eq:dfcalc1b} and \eqref{eq:dfcalc}, the construction of $\mu^{(n)}_{kj}$ only depends on $\beta^{(n)}_{kj}$ (and the smooth function $r$). Moreover, $\mu_{kj}^{(n)}$ and its derivative on every side of $\partial\M$ depend only on $\beta^{(n)}_{kj}$ and its derivatives on that side.  As a result, for every $x,y\in[0,1]$, and $v,w = \{\partial_x,\partial_y\}$,
\[
\begin{gathered}
\mu^{(n)}_{k\,j+1}(v)(x,0) = \mu^{(n)}_{kj}(v)(x,1) \\
\mu^{(n)}_{k+1\,j}(v)(0,y) = \mu^{(n)}_{kj}(v)(1,y) \\
\calL_w \mu^{(n)}_{k\,j+1}(v)(x,0) = \calL_w \mu^{(n)}_{kj}(v)(x,1) \\
\calL_w\mu^{(n)}_{k+1\,j}(v)(0,y) = \calL_w\mu^{(n)}_{kj}(v)(1,y).
\end{gathered}
\]
Since the relation between $ \mu^{(n)}_{kj}$ and $\nu^{(n)}$ is once again a pullback under a combination of scaling and translation, we obtain that $\nu^{(n)}$ is continuously-differentiable along the skeleton.

We proceed to prove Property~(iv): by Property~(iii) of \propref{prop:nu},  
\[
\begin{split}
\nu^{(n)}|_{\iota^{(n)}_{kj}(\Mleft)} &= (\iota_{kj}^{(n)})_\star\mu_{kj}^{(n)}|_{\iota^{(n)}_{kj}(\Mleft)} \\
&= (\iota_{kj}^{(n)})_\star\beta_{kj}^{(n)}|_{\iota^{(n)}_{kj}(\Mleft)} \\ 
&= (\iota_{kj}^{(n)})_\star(\iota^{(n)}_{kj})^\star \beta|_{\iota^{(n)}_{kj}(\Mleft)} \\
&= \beta|_{\iota^{(n)}_{kj}(\Mleft)},
\end{split}
\]
i.e., $\nu^{(n)}$ coincides with $\beta$ on the vertical components of the skeleton.

Property~(ii) is immediate as $\mu^{(n)}_{kj}$ are closed and closedness is invariant under the pullback operation. Finally, Property~(iii) follows from Property~(iv) in \propref{prop:nu},
\[
\begin{split}
\int_{\partial\M^{(n)}_{kj}}\nu^{(n)} &= \int_{\iota^{(n)}_{kj}(\partial\M)} ((\iota_{kj}^{(n)})^{-1})^\star \mu_{kj}^{(n)} \\
&= \int_{\partial\M}\mu_{kj}^{(n)} \\
&= \int_{\partial\M}\beta_{kj}^{(n)} \\
&= \int_{\partial\M^{(n)}_{kj}}\beta.
\end{split}
\]
{}

\qed\end{proof}

As in the case of a single dislocation, we define for each $n$ the 1-current induced by $\nu^{(n)}$: 
\[
T_{\nu^{(n)}}(\alpha)=\int_\M \nu^{(n)}\wedge \alpha
\qquad
\alpha \in \Omega^1_c(\M).
\]
Its boundary $\partial T_{\nu^{(n)}}$ is a 0-current given by 
\[
\partial T_{\nu^{(n)}}(f)=\sum_{k,j=1}^{n-1}\int_{\Gamma^{(n)}_{kj}}f[\nu^{(n)}]_{\Gamma^{(n)}_{kj}}
\]
where $[\nu^{(n)}]_{\Gamma^{(n)}_{kj}}$ is the discontinuity jump of $\nu^{(n)}$ along $\Gamma^{(n)}_{kj}$, given by,
\[
[\nu^{(n)}]_{\Gamma^{(n)}_{kj}}(x,(j+1/2)/n) =
 \frac{n}{a}r'\brk{\frac{nx - k -1/2}{a}}\, \int_{\partial\M^{(n)}_{kj}} \beta.
\]
Thus, we view $\nu^{(n)}$ as a layering form on $\M$ having $n^2$ edge-dislocations concentrated on $\Gamma^{(n)}$. The loci of the dislocations are revealed by the boundary of the differential current induced by $\nu^{(n)}$. Here too, $\M\setminus\Gamma^{(n)}$ is defect-free only to the extent detectable by $\nu^{(n)}$.

\begin{theorem}[Homogenization]
\label{thm:Homogenization}
The sequence 
$T_{\nu^{(n)}}$ of 1-forms converges to $T_\beta$ in the sense of currents: for every $\alpha\in \Omega^1_c(\M)$,
\[
\limn \int_\M \nu^{(n)}\wedge \alpha =  \int_\M \beta\wedge \alpha,
\]
or equivalently,
\beq
\label{hom:2}
\limn T_{\nu^{(n)}-\beta}(\alpha)= 0.
\eeq
\end{theorem}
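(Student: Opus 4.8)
The plan is to deduce the weak convergence of currents from an $L^1$ convergence of forms. Since any test form $\alpha\in\Omega^1_c(\M)$ is bounded, one has the pointwise estimate $|(\nu^{(n)}-\beta)\wedge\alpha|\le C\,|\nu^{(n)}-\beta|\,|\alpha|$ with respect to any fixed Riemannian metric on the compact manifold $\M$, hence
\[
|T_{\nu^{(n)}-\beta}(\alpha)| \;=\; \Bigl|\int_\M(\nu^{(n)}-\beta)\wedge\alpha\Bigr|\;\le\; C\,\|\alpha\|_{C^0}\,\|\nu^{(n)}-\beta\|_{L^1(\M)}.
\]
So it suffices to show $\|\nu^{(n)}-\beta\|_{L^1(\M)}\to 0$; this is in fact a statement stronger than weak convergence of the currents, and the measure-zero set $\Gamma^{(n)}$, off which $\nu^{(n)}$ is defined, is irrelevant to it.

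I would estimate $\|\nu^{(n)}-\beta\|_{L^1}$ tile by tile. On $\M^{(n)}_{kj}$ one has, directly from \eqref{eq:beta_nkj}--\eqref{eq:local_nu_n}, that $\nu^{(n)}-\beta=(\iota^{(n)}_{kj})_\star(\mu^{(n)}_{kj}-\beta^{(n)}_{kj})$; and since $\iota^{(n)}_{kj}$ is a translation followed by a dilation by $1/n$, pushing a $1$-form forward multiplies its components by $n$, so $\|\nu^{(n)}-\beta\|_{C^0(\M^{(n)}_{kj})}\le Cn\,\|\mu^{(n)}_{kj}-\beta^{(n)}_{kj}\|_{C^0(\M)}$. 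The crucial arithmetic is that the same rescaling gives $\|D^i\beta^{(n)}_{kj}\|_{C^0}=O(n^{-i-1})$, so that $\beta^{(n)}_{kj}=O(1/n)$ but $d\beta^{(n)}_{kj}=O(1/n^2)$ --- a gain of one power of $1/n$ upon differentiating --- while the construction of \propref{prop:nu} is arranged so that $\mu^{(n)}_{kj}$ departs from $\beta^{(n)}_{kj}$ only through quantities controlled by $d\beta^{(n)}_{kj}$, exactly the extra power needed to survive the $n$-fold amplification of the pushforward.

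Concretely I would split $\M^{(n)}_{kj}$ into the thin vertical strip $\iota^{(n)}_{kj}(\M_{a/n})$ (of width $\sim n^{-2}$) and its complement. On the complement, Property~(v) of \propref{prop:nu} says the horizontal components of $\mu^{(n)}_{kj}$ and $\beta^{(n)}_{kj}$ agree, while \eqref{eq:dfcalc1}--\eqref{eq:dfcalc1b} express the difference of their vertical components as the $x$-primitive of a component of $d\beta^{(n)}_{kj}$; hence the discrepancy is $O(\|d\beta^{(n)}_{kj}\|_{C^0})=O(1/n^2)$, and after the pushforward $\|\nu^{(n)}-\beta\|_{C^0}=O(1/n)$ there. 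On the strip I would only need the crude uniform bound $\|\mu^{(n)}_{kj}\|_{C^0}=O(1/n)$, read off from \eqref{eq:dfcalc}: the convex-combination terms are $O(\|d\bar f^{(n)}_{kj}\|_{C^0})=O(1/n)$, and the term $\tfrac na r'(\cdot)(p_R-p_L)$ is $O(1/n)$ because $p_R-p_L$ is the difference of the two quadratic Taylor polynomials of $\bar f^{(n)}_{kj}$ at points a distance $a/n$ apart, which differ by $O(\|d\beta^{(n)}_{kj}\|_{C^0})+O((a/n)^3\|\bar f^{(n)}_{kj}\|_{C^3})=O(1/n^2)$ --- the comparison of the two primitives reducing again to Stokes' theorem as in \eqref{eq:closed_nu}. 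After the pushforward this gives $\|\nu^{(n)}-\beta\|_{C^0}=O(1)$ on the strip, so $\nu^{(n)}$ is $C^0$-bounded uniformly in $n$. Summing: each tile (area $n^{-2}$) contributes $\lesssim \tfrac1n\cdot\tfrac1{n^2}$ from its complement part and $\lesssim 1\cdot\tfrac{a}{n^3}$ from the strip (area $\sim a n^{-3}$), so over the $n^2$ tiles $\|\nu^{(n)}-\beta\|_{L^1(\M)}=O(1/n)\to 0$.

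The main obstacle is precisely these two estimates --- the uniform $C^0$ bound on $\nu^{(n)}$ and the $O(1/n)$ decay off the strips --- both of which hinge on recognizing that the errors introduced in \propref{prop:nu} (the discrepancy $\mu^{(n)}_{kj}-\beta^{(n)}_{kj}$ and the Taylor mismatch $p_R-p_L$) are governed by $d\beta^{(n)}_{kj}=O(1/n^2)$ rather than by $\beta^{(n)}_{kj}=O(1/n)$. Tracking the primitive and Taylor-remainder terms in \eqref{eq:dfcalc1}--\eqref{eq:dfcalc}, and checking that every implied constant depends only on finitely many $C^i$-norms of $\beta$ (hence is uniform over the tiles and over $n$), is the one place that needs care, though no individual estimate is deep.
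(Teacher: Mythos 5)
Your proposal is correct and follows essentially the same route as the paper: reduce the weak convergence to an $L^1$-type bound on $\nu^{(n)}-\beta$, estimate tile by tile via the pullback to the unit square using the scaling $\beta^{(n)}_{kj}=O(1/n)$, $d\beta^{(n)}_{kj}=O(1/n^2)$, and treat the thin strips $\M_{a/n}$ separately from their complements, arriving at the same overall $O(1/n)$ rate. If anything, you are more explicit than the paper at the one delicate point---the cancellation $p_R-p_L=O(1/n^2)$ obtained by comparing the two primitives via Stokes' theorem, which the paper's remark that ``$p_L$ and $p_R$ are $O(1/n)$'' leaves implicit even though it is genuinely needed for the strip contribution to vanish.
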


\begin{proof}\smartqed

Choose any metric on $\M$; for concreteness we will take the Euclidean metric associated with the parametrization.
By our choice of metric, if $\beta = \beta_1 \, dx + \beta_2\, dy$, then
\[
\|\beta_{(x,y)}\|^2 = \beta_1^2(x,y) + \beta_2^2(x,y).
\]

For every $\alpha\in \Omega^1_c(\M)$,
\[
\begin{split}
T_{\nu^{(n)}-\beta}(\alpha) &= 
\sum_{k,j=0}^{n-1} \int_{\M^{(n)}_{kj}} (\nu^{(n)} -\beta)\wedge \alpha \\
&= \sum_{k,j=0}^{n-1} \int_{\iota_{kj}^{(n)}(\M)} ((\iota_{kj}^{(n)})^{-1})^\star(\mu^{(n)}_{kj}-\beta^{(n)}_{kj})\wedge \alpha \\
&= \sum_{k,j=0}^{n-1}\int_\M  (\mu^{(n)}_{kj}-\beta^{(n)}_{kj})\wedge (\iota_{kj}^{(n)})^\star\alpha.
\end{split}
\]
Fix $0\leq k,j\leq n-1$.  Since
\[
\Norm{{(\iota_{kj}^{(n)})^\star \alpha|_{\M^{(n)}_{kj}}}}_\infty \le 
\frac{1}{n} \|\alpha\|_\infty,
\]
it follows that
\[
\begin{split}
\left| \int_\M  (\mu^{(n)}_{kj}-\beta^{(n)}_{kj})\wedge (\iota_{kj}^{(n)})^\star\,\alpha\right| 
&\le \frac{1}{n} \|\alpha\|_\infty  \sup_{\|\xi\|_\infty=1} \Abs{\int_\M (\mu^{(n)}_{kj}-\beta^{(n)}_{kj}) \wedge \xi} \\
&\le \frac{1}{n}\, \|\alpha\|_\infty \int_\M |\mu^{(n)}_{kj}-\beta^{(n)}_{kj}| \, dx\wedge dy
\end{split}.
\]
Thus, so far,
\[
\begin{split}
T_{\nu^{(n)}-\beta}(\alpha) \le n\, \|\alpha\|_\infty \sup_{0\le k,j< n} \int_\M |\mu^{(n)}_{kj}-\beta^{(n)}_{kj}| \, dx\wedge dy.
\end{split}
\]

Now,
\[
(\beta^{(n)}_{kj})_{(x,y)} = \frac{1}{n}\beta_1\brk{\frac{x+k}{n},\frac{y+j}{n}} \,dx + \frac{1}{n}\beta_2\brk{\frac{x+k}{n},\frac{y+j}{n}} \,dy.
\]
By \eqref{eq:dfcalc1b}, for $x<1/2-a/2n$, 
\[
\begin{split}
(\mu^{(n)}_{kj})_{(x,y)} &= \frac{1}{n}\beta_1\brk{\frac{x+k}{n},\frac{y+j}{n}}\, dx \\ 
&+ \brk{\frac{1}{n}\beta_2\brk{\frac{k}{n},\frac{y+j}{n}} 
+ \int_0^x\frac{1}{n^2}\pd{\beta_1}{y}\brk{\frac{x'+k}{n},\frac{y+j}{n}} dx'}dy,
\end{split}
\]
so that 
\[
\begin{split}
n\, |\mu^{(n)}_{kj}-\beta^{(n)}_{kj}|(x,y) &\le 
\Abs{\beta_2\brk{\frac{x+k}{n},\frac{y+j}{n}}  - \beta_2\brk{\frac{k}{n},\frac{y+j}{n}} } \\
&\qquad+ \frac{1}{n} \int_0^x\Abs{\pd{\beta_1}{y}\brk{\frac{x'+k}{n},\frac{y+j}{n}}} dx' \\
&\le \frac{1}{n} \brk{\Norm{\pd{\beta_2}{x}}_\infty + \Norm{\pd{\beta_1}{y}}_\infty}.
\end{split}
\]

The same bound is obtained for $x>1/2+a/2n$. Finally, for $|x-1/2|<a/2n$, using \eqref{eq:dfcalc}, and noting that $p_L$ and $p_R$ are $O(1/n)$, we obtain that
\[
n\, |\mu^{(n)}_{kj}-\beta^{(n)}_{kj}|(x,y) \le \frac{C}{a} \|r'(x)\|_\infty,
\]
where $C$ is some constant.
Putting it all together,
\[
\begin{split}
T_{\nu^{(n)}-\beta}(\alpha) &\le n\, \|\alpha\|_\infty \sup_{0\le k,j< n} \int_{\M\setminus\M_{a/n}} |\mu^{(n)}_{kj}-\beta^{(n)}_{kj}| \, dx\wedge dy \\
&\qquad+ n\, \|\alpha\|_\infty \sup_{0\le k,j< n} \int_{\M_{a/n}} |\mu^{(n)}_{kj}-\beta^{(n)}_{kj}| \, dx\wedge dy \\
&\le \frac{\|\alpha\|_\infty}{n}\brk{\Norm{\pd{\beta_2}{x}}_\infty + \Norm{\pd{\beta_1}{y}}_\infty + C\,\|r'(x)\|_\infty }.
\end{split}
\]
Letting $n\to\infty$ we obtain the desired result.

%
%
%
\qed\end{proof}

\section{Singular torsion and its homogenization}
\label{sec:torsion}

Thus far, we analyzed a lattice structure through a single layering form, representing a single family of Bravais surfaces. 
In $n$ dimension, a lattice structure is fully determined by a set of $n$ linearly-independent layering forms, i.e., by a coframe $\{\cof{i}\}$. Denote by $\{e_i\}$ the frame field dual to $\{\cof{i}\}$.

A frame-coframe structure induces a path-independent parallel transport,
\beq
\Pi_p^q : T_p\M\to T_q\M
\qquad\text{given by}\qquad
\Pi_p^q = e_i|_q \otimes \cof{i}|_p.
\label{eq:Pi}
\eeq
In turn, the specification of a path-independent parallel transport induces a connection $\nabla$ having trivial holonomy, which locally implies zero curvature. 
By construction, the frame field $\{e_i\}$ and its dual $\{\cof{i}\}$ are $\nabla$-parallel sections,
\[
\nabla e_i =0
\Textand
\nabla \cof{i} = 0.
\]

The torsion tensor associated with $\nabla$ is a $T\M$-valued 2-form $\tau$, given by
\[
\tau(e_i,e_j)=\nabla_{e_i}e_j-\nabla_{e_j}e_i-[e_i,e_j]=[e_j,e_i].
\]
Since for every $1\leq i,j,k\leq n$,
\[
\begin{split}
d\cof{i}(e_j,e_k) &=e_j (\cof{i}(e_k)) - e_k (\cof{i}(e_j)) - \cof{i}([e_j,e_k]) \\
&=\cof{i}([e_k,e_j]) \\
&= \cof{i}(\tau(e_j,e_k)),
\end{split}
\]
we conclude that $d\cof{i} = \cof{i}\circ\tau$,  or equivalently,
\beq
\tau = e_i\otimes d\cof{i}.
\label{eq:torsion}
\eeq
In particular, torsion vanishes if and only if $d\cof{i}=0$ for all $1\leq i\leq n$, or  equivalently, if $[e_i,e_j]=0$ for all $1\leq i,j\leq n$.


The question we are addressing henceforth is in what sense may the smooth torsion $\tau$ given by \eqref{eq:torsion} a limit of torsions associated with singular dislocations. 
For example, let $\calM$, $\beta$ and $\nu^{(n)}$ be defined as in the previous section, and suppose that 
\[
\cof{1}_{(n)} = \nu^{(n)}
\Textand
\cof{2}_{(n)} = dx
\]
is a sequence of coframe fields (namely, $\nu^{(n)}$ are $dx$ are linearly independent).
By the analysis of the previous section (and trivially for $\cof2$),
\[
\limn T_{\cof{1}_{(n)}} = T_\beta
\Textand
\limn T_{\cof{2}_{(n)}} = T_{dx},
\]
i..e,
\[
\limn \{\cof{1}_{(n)},\cof{2}_{(n)}\} = \{\beta,dx\}
\]
in the sense of weak convergence of currents.

Since the coframe field $\{\cof{1}_{(n)},\cof{2}_{(n)}\}$ consists of closed forms, the induced torsion on $\M\setminus\Gamma^{(n)}$ vanishes identically for every $n$, 
\[
\tau^{(n)} = e_i^{(n)}\otimes d\cof{i}_{(n)} = 0,
\]
which, if $d\beta\ne0$,  does not converge to the torsion 
\[
\tau =  \frac{1}{\beta_2} \partial_y\otimes d\beta
\]
associated with the limiting coframe field in any classical sense.

The question is how to cast a weak convergence of torsion in the framework of de-Rham currents. Torsion is a tangent bundle-valued 1-form. While it is possible to define currents associated with tangent bundle-valued forms, see e.g. \cite{RS12}, this approach doesn't seem applicable here. A simple heuristic argument shows that if we try to interpret torsion as a distribution for a discontinuous coframe field, we obtain the product of a discontinuous section $e_i$ and the derivative of a discontinuous section $d\cof{i}$, which is not well-defined.  

A hint toward a correct interpretation of singular torsion is obtained by considering Burgers circuits: Let $C$ be a simple, oriented, regular closed curve in $\M$. The Burgers vector associated with the curve $C$ is a parallel vector field $B$ \cite{KMS15}, whose value at a reference point $p$ is given by
\[
B_p =\oint_C \Pi^p_{\gamma}(d\gamma),
\]
where $\Pi^p$ is the parallel-transport to $p$,
given by
\[
\Pi^p = e_i|_p\otimes \cof{i},
\]
and $\gamma$ is a parametrization for $C$. Interpreting $\Pi^p$ as a $T_p\M$-valued 1-form, we rewrite the Burgers vector $B_p$  in a more abstract form,
\[
B_p=\oint_C \Pi^p.
\]
Applying Stokes' theorem,
\[
B_p = \int_\Sigma d\Pi^p,
\]
where $\partial\Sigma=C$. Hence,
\[
B_p =  e_i|_p \int_\Sigma d\cof{i}.
\]
Thus, having chosen a reference point $p$, the Burgers vector for a loop $C$ is an integral over the area enclosed by this loop of a Burgers vector density
\[
e_i|_p \otimes d\cof{i},
\]
which is a $T_p\M$-valued 2-form; it is nothing but the torsion $\tau$, whose output, once acting on a bivector, is parallel-transported to the reference point $p$. We henceforth denote 
\[
\tau_p = \Pi^p\circ\tau = e_i|_p \otimes d\cof{i}.
\]
The notion of singular torsion may now be easily defined as the distributional counterpart of $\tau_p$ by replacing $d\cof{i}$ with the boundary current $\partial T_{\cof{i}}$.  However, we first need to define the notion of a singular frame. 
Rather than choosing the most general framework possible, we adopt a possibly restrictive but yet sufficiently rich and  physically motivated approach:

\begin{definition}
Let $\M$ be a compact $n$-dimensional manifold. A collection $\{\cof{i}\}_{i=1}^n$ of 1-forms is called a \Emph{singular coframe} for $\M$ if for every $1\leq i\leq n$,  there exists a compact $(n-1)$-dimensional submanifold $\Gamma^i\subset\M$, such that
\begin{enumerate}
\item Each $\cof{i}$ is a $C^1$-bounded 1-form on $\M\setminus \Gamma^i$.
\item $\{\cof{i}_p\}_{i=1}^n$ is a basis for $T_p^*\M$ for every $p\in \M\setminus \Gamma$ where $\Gamma=\cup_i\Gamma^i$.
\item $ \M\setminus \Gamma$ is path connected and $\partial\M\cap\Gamma=\emptyset$.
\end{enumerate}
A \Emph{closed singular coframe} is a singular coframe  $\{\cof{i}\}_{i=1}^n$ satisfying $d\cof{i}=0$ on $\M\setminus\Gamma^i$ for every $1\leq i\leq n$. 
\end{definition}

Recall that if a layering form $\omega\in \Omega^1(\M)$ is closed, its induced layering structure (foliation) is defect free. A closed singular coframe therefore corresponds to isolated defects which are concentrated on a set of measure zero.

We next define singular torsion: 

\begin{definition}
Let $\{\cof{i}\}_{i=1}^{n}$ be a \emph{singular coframe field} on $\M$ and let $p\in\M\setminus\Gamma$ be an arbitrary reference point.  The \emph{torsion current}, is a $T_p\M$-valued $(n-2)$-current given by,
\[
\calT =  e_i|_p \, \partial T_{\cof{i}}.
\] 
\end{definition}

First, note that for a smooth coframe $\{\cof{i}\}_{i=1}^{n}$, the torsion current is given by
\beq
\label{eq:smoothtorsion}
\calT(\alpha)=e_i|_p\,\partial T_{\cof{i}}(\alpha)=e_i|_p T_{d\cof{i}}(\alpha)=T_{\tau_p}(\alpha),\quad \alpha\in \Omega^{n-1}_c(\M).
\eeq
In other words, in the smooth case, the torsion current $\calT$ is the $T_p\M$-valued $(n-2)$-current induced by the smooth $T_p\M$-valued 2-form $\tau_p$. 

In the case of a closed singular coframe (isolated defects),  the singular torsion is supported on the singularity hyper-surfaces $\{\Gamma^i\}$ and is given explicitly by
\beq
\label{eq:singulartorsion}
\mathcal{T}[p](\eta)=\sum_{i=1}^n\brk{\int_{\Gamma^i}[\cof{i}]_{\Gamma^i}\wedge \eta}e_i(p),
\eeq	
where $[\cof{i}]_{\Gamma^i}$ is the discontinuity jump of $\cof{i}$ along $\Gamma^i$ and $\eta\in\Omega_c^{n-2}(\M)$.
For a general (non-closed) singular frame $\{\cof{i}\}$, the torsion current naturally decomposes to a smooth component  as in equation \eqref{eq:smoothtorsion} and a singular component as in \eqref{eq:singulartorsion}.

We have thus obtained the following corollary:
\begin{corollary}[Homogenization of torsion]
\label{corr:torsion}
Let $\{\cof{i}_{(k)}\}$ be a sequence of (possibly) singular coframes and $p\in\M$,  a reference point, satisfying:
\begin{enumerate}
\item   There exists a (possibly) singular frame $\{\cof{i}\}$ such that $\{\cof{i}_{(k)}\}$ converges to $\{\cof{i}\}$ in the sense of currents. That is 
\[
T_{\cof{i}_{(k)}}\to T_{\cof{i}}\quad\text{as}\,\,k\to\infty,\quad \forall \,1\leq i\leq n.
\]
\item The point $p$ is outside the singularity sets of $\{\cof{i}_{(k)}\}$ and $\{\cof{i}\}$ and $(\cof{i}_{(k)})_p\to \cof{i}_p$ (pointwise) for every $1\leq i\leq n$.
\end{enumerate}
Let 
\[
\calT_k =  e_i^{(k)}|_p \, \partial T_{\cof{i}_{(k)}}
\Textand
\calT =  e_i|_p \, \partial T_{\cof{i}}
\]
be the corresponding $T_p\M$-valued $(n-2)$-torsion currents. Then, $\calT_k\to\calT$ in the sense of currents.
\end{corollary}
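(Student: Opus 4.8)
\textbf{Proof proposal for Corollary~\ref{corr:torsion}.}

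The plan is to reduce the convergence $\calT_k \to \calT$ to two independent ingredients: the weak convergence of the boundary currents $\partial T_{\cof{i}_{(k)}} \to \partial T_{\cof{i}}$, and the pointwise convergence of the dual frame vectors $e_i^{(k)}|_p \to e_i|_p$ at the single reference point $p$. Since $\calT_k(\eta) = e_i^{(k)}|_p\,\partial T_{\cof{i}_{(k)}}(\eta) = \sum_i \big(\partial T_{\cof{i}_{(k)}}(\eta)\big)\, e_i^{(k)}|_p$ is a finite $\bbR$-linear combination (with scalar coefficients $\partial T_{\cof{i}_{(k)}}(\eta)$) of the fixed-dimensional vectors $e_i^{(k)}|_p \in T_p\M$, it suffices to prove, for each fixed test form $\eta \in \Omega^{n-2}_c(\M)$ and each $i$, that the scalar $\partial T_{\cof{i}_{(k)}}(\eta) \to \partial T_{\cof{i}}(\eta)$ and the vector $e_i^{(k)}|_p \to e_i|_p$; then the product converges in $T_p\M$, and summing over the $n$ indices $i$ gives $\calT_k(\eta) \to \calT(\eta)$, which is exactly convergence in the sense of currents.

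The first ingredient is essentially free: by definition of the boundary operator, $\partial T_{\cof{i}_{(k)}}(\eta) = T_{\cof{i}_{(k)}}(d\eta)$, and since $\eta \in \Omega^{n-2}_c(\M)$ is smooth and compactly supported, $d\eta \in \Omega^{n-1}_c(\M)$ is a legitimate test form; hence hypothesis~(1), $T_{\cof{i}_{(k)}} \to T_{\cof{i}}$ in $\scrD_{n-1}(\M)$, applied to the test form $d\eta$, yields $\partial T_{\cof{i}_{(k)}}(\eta) = T_{\cof{i}_{(k)}}(d\eta) \to T_{\cof{i}}(d\eta) = \partial T_{\cof{i}}(\eta)$. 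In other words, the boundary operator is weak-$\star$ continuous, so hypothesis~(1) already gives $\partial T_{\cof{i}_{(k)}} \to \partial T_{\cof{i}}$. The second ingredient is where hypothesis~(2) must be used carefully: we are given $(\cof{i}_{(k)})_p \to \cof{i}_p$ pointwise in $T_p^*\M$ for every $i$, and we need to deduce $e_i^{(k)}|_p \to e_i|_p$ in $T_p\M$. This follows because the dual frame is obtained from the coframe by inverting the $n \times n$ matrix of components; writing $e_i^{(k)}|_p = (A^{(k)})^j_i\, \partial_j$ where $(A^{(k)})$ is the inverse of the matrix whose rows are the components of $(\cof{i}_{(k)})_p$ in any fixed coordinate basis at $p$, the hypothesis that $p$ lies outside all singularity sets of $\{\cof{i}_{(k)}\}$ and of $\{\cof{i}\}$ guarantees these matrices are invertible, and matrix inversion is continuous on $GL_n(\bbR)$; since the coframe matrices converge (entrywise, by hypothesis~(2)) to the invertible limit matrix, their inverses converge, so $e_i^{(k)}|_p \to e_i|_p$.

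The only point requiring a modicum of care — and the one I would flag as the ``main obstacle,'' though it is minor — is that the reference point $p$ is fixed once and for all, independent of $k$, so that the vectors $e_i^{(k)}|_p$ and $e_i|_p$ all live in the single fixed finite-dimensional space $T_p\M$; this is exactly what makes the decomposition ``scalar coefficient times vector'' legitimate and lets us pass to the limit in the product without any uniformity in $\eta$. (If $p$ were allowed to vary with $k$ one would need to transport between different tangent spaces and the argument would break.) Given that $p$ is fixed, the product rule for limits in $T_p\M \cong \bbR^n$ applies term by term: $\partial T_{\cof{i}_{(k)}}(\eta)\, e_i^{(k)}|_p \to \partial T_{\cof{i}}(\eta)\, e_i|_p$, and summing the $n$ terms completes the proof. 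Everything reduces to weak-$\star$ continuity of $\partial$ and continuity of matrix inversion; no estimates specific to the construction of $\nu^{(n)}$ in the previous sections are needed, which is precisely why this is stated as a corollary rather than a theorem.
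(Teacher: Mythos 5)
Your proof is correct and fills in exactly the argument the paper leaves implicit (the corollary is stated without an explicit proof): weak-$\star$ continuity of the boundary operator handles $\partial T_{\cof{i}_{(k)}}\to\partial T_{\cof{i}}$, and continuity of matrix inversion at the invertible limit matrix (guaranteed because $p$ avoids all singularity sets) handles $e_i^{(k)}|_p\to e_i|_p$, after which the product of convergent scalar and vector sequences in the fixed space $T_p\M$ converges. No issues.
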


In particular, if $\{\cof{i}_{(k)}\}$ are singular closed frames for every $k$ and the limiting frame $\{\cof{i}\}$ is smooth, then  $\mathcal{T}_k$  and $\mathcal{T}$ are given by \eqref{eq:singulartorsion} and \eqref{eq:smoothtorsion} respectively.  The limiting smooth torsion is thus obtained as a limit of singular torsion currents supported on singular sets of measure zero. 

For example, given a smooth coframe  $\{\cof{1},\cof{2}\}$ for the unit square $\M=[0,1]^2$, we have by  Theorem \ref{thm:Homogenization} a sequence of closed singular frames $\{\cof{1}_{(k)},\cof{2}_{(k)}\}$ corresponding to an array of dislocations which converge to the co-frame $\{\cof{1},\cof{2}\}$ in the sense of currents. The corresponding torsion currents $\mathcal{T}_{(k)}$ act on functions by integration along the dislocation segments of the $k\times k$ dislocation array corresponding to $\cof{1}_{(k)}$.

\paragraph{Acknowledgments}
The authors would like to thank Reuven Segev and Cy Maor for many helpful discussions  and for revising our paper.

This research was partially funded by the Israel Science Foundation (Grant No. 1035/17), and by a grant from the Ministry of Science, Technology and Space, Israel and the Russian Foundation for Basic Research, the Russian Federation.

\bibliographystyle{spphys}
\bibliography{/Users/raz/Dropbox/tex/Refs/MyBibs}

\begin{thebibliography}{10}
\providecommand{\url}[1]{{#1}}
\providecommand{\urlprefix}{URL }
\expandafter\ifx\csname urlstyle\endcsname\relax
  \providecommand{\doi}[1]{DOI \discretionary{}{}{}#1}\else
  \providecommand{\doi}{DOI \discretionary{}{}{}\begingroup
  \urlstyle{rm}\Url}\fi

\bibitem{Vol07}
V.~Volterra, Ann. Sci. Ecole Norm. Sup. Paris 1907 \textbf{24}, 401 (1907)

\bibitem{Nye53}
J.~Nye, Acta Met. \textbf{1}, 153 (1953)

\bibitem{Kon55}
K.~Kondo, in \emph{Memoirs of the Unifying Study of the Basic Problems in
  Engineering Science by Means of Geometry}, vol.~1, ed. by K.~Kondo (1955),
  pp. 5--17

\bibitem{BBS55}
B.~Bilby, R.~Bullough, E.~Smith, Proc. Roy. Soc. A \textbf{231}, 263 (1955)

\bibitem{Kro81}
E.~Kr{\"o}ner, in \emph{Les Houches Summer School Proceedings}, ed. by
  R.~Balian, M.~Kleman, J.P. Poirier (North-Holland, Amsterdam, 1981)

\bibitem{Dav86}
C.~Davini, Arch. Rat. Mech. Anal. \textbf{96}, 295 (1986)

\bibitem{YG12}
A.~Yavari, A.~Goriely, Arch. Rational Mech. Anal. \textbf{205}, 59 (2012)

\bibitem{KM15}
R.~Kupferman, C.~Maor, J. Geom. Mech. \textbf{7}, 361 (2015)

\bibitem{KM16}
R.~Kupferman, C.~Maor, Proc. Roy. Soc. Edinburgh \textbf{146A}, 741 (2016)

\bibitem{ES14}
M.~Epstein, R.~Segev, Int. J. Non-Linear Mech. \textbf{66}, 105 (2014)

\bibitem{Fed69}
H.~Federer, \emph{Geometric Measure Theory} (Springer-Verlag, 1969)

\bibitem{deR84}
G.~de~Rham, \emph{Differentiable Manifolds} (Springer, 1984)

\bibitem{LY02}
F.~Lin, X.~Yang, \emph{Geometric Measure Theory: An Introduction}.
\newblock Advanced Mathematics (Science Press, 2002)

\bibitem{KP08}
S.~Krantz, H.~Parks, \emph{Geometric Integration Theory} (Springer, 2008)

\bibitem{Rud91}
W.~Rudin, \emph{Functional Analysis}, 2nd edn. (McGraw-Hill, 1991)

\bibitem{Eps10}
M.~Epstein, \emph{The Geometrical Language of Continuum Mechanics} (Cambridge
  University Press, 2010)

\bibitem{Lee12}
J.~Lee, \emph{Introduction to smooth manifolds}, second edition edn. (Springer,
  2012)

\bibitem{KMS15}
R.~Kupferman, M.~Moshe, J.~Solomon, Arch. Rat. Mech. Anal \textbf{216}, 1009
  (2015)

\end{thebibliography}

\end{document}